\documentclass{article}
	\usepackage{amsmath,amssymb,amsfonts,amsthm}
	\usepackage{graphicx}
	\usepackage{cite}
	\usepackage{fullpage}
	\usepackage{commath}
	\newtheorem{theorem}{Theorem}[section]
	\newtheorem{lemma}[theorem]{Lemma}

	\newtheorem{claim}{Claim}[theorem]
	\theoremstyle{definition}

	\theoremstyle{remark}

		\title{On the complexity of the outer-connected bondage and the outer-connected reinforcement problems}
		\author{ M. Hashemipour$^1$,  M. R. Hooshmandasl$^2$, A. Shakiba$^3$ \\
			\footnotesize{$^{1,2}$Department of Computer Science, Yazd University, Yazd, Iran. }  \\
			\footnotesize{$^{1,2,3}$The Laboratory of Quantum Information Processing, Yazd University, Yazd, Iran.}  \\
			\footnotesize{$^{3}$Department of Computer Science, Vali-e-Asr University of Rafsanjan, Rafsanjan, Iran}\\
			\footnotesize{e-mail: $^1$mhashemi@stu.yazd.ac.ir, $^2$hooshmandasl@yazd.ac.ir, $^3$ali.shakiba@vru.ac.ir.
			}}
			
			\date{}

	\begin{document}
			\maketitle
			
	\begin{abstract}
	Let $G=(V,E)$ be a graph. 	A subset $S \subseteq V$ is a \emph{dominating set} of
	$G$ if every vertex not in $S$ is adjacent to a vertex in $S$.  	A set $\tilde{D} \subseteq V$ of a graph $G=(V,E) $ is called an \emph{outer-connected dominating set} for $G$ if (1) $\tilde{D}$ is a dominating set for $G$, and (2)  $G [V \setminus \tilde{D}]$, the induced subgraph of $G$ by $V \setminus \tilde{D}$, is connected. The minimum size among all outer-connected dominating sets of $G$ is called the \emph{outer-connected domination number} of $G$ and is denoted by $\tilde{\gamma}_c(G)$. We define the \emph{outer-connected  bondage number} of a graph $G$ as the minimum number of edges
	whose removal from $G$ results in a graph with an outer-connected domination number larger than the one for $G$. Also, the \emph{outer-connected reinforcement
		number} of a graph $G$   is defined as  the minimum number of edges whose addition to $G$ results
	in a graph with an outer-connected domination number, which is  smaller than the one for $G$. This paper shows that the decision problems for the \emph{outer-connected bondage } and the 
	\emph{outer-connected reinforcement} numbers are $\mathbf{NP}$-hard. Also, the exact values of the bondage number are determined for several classes of graphs.
\end{abstract}	
	\section{Introduction}	
    	The terminology and notation on graph theory in this paper in general follows the reference \cite{haynes1998fundamentals}.
		Let $G = (V,E)$ be a graph with vertex set $V$ and edge set $E$. The graph $G$ is called to be of  order $|V|$  and size $|E|$.  Also, we use $V(G)$ and $E(G)$ to denote the vertex set and 
		the edge set for the graph $G$, respectively.
		Let  $v$ be a vertex in $V.$ The \emph{open neighborhood} of $v$ is denoted by $N_G(v)$ and is defined as $ \{u \in V : \{u,v\} \in
		E(G)\}$. Similarly, the \emph{closed neighborhood} of $v$ is denoted by $N_G[v]$ and is defined as $\{v\} \cup N_G(v)$.  Whenever the graph $G$
		is clear from the context, we simply write $N(v)$ and $N[v]$ to denote $N_G(v)$ and $N_G[v]$, respectively.
	A \emph{leaf} vertex in $G$ is a vertex of degree one. We denote the \emph{path} of order $n$ by $P_n$, the \emph{cycle} of order $n$ by $C_n$ and the \emph{star}
		 of order $n$ by $S_n$.
	A forest where each component is a star is called a \emph{galaxy}.
	 For a subset $S$ of vertices of $G$, we refer to $G[S]$ as the subgraph of $G$ induced by $S$.
		A subset $S \subseteq V$ is a \emph{dominating set} of
		$G$ if every vertex not in $S$ is adjacent to a vertex in $S$. The \emph{domination number}
		of $G$, denoted by $\gamma(G)$, is the minimum cardinality among all dominating sets of $G$. A
		dominating set $S$ is called a $\gamma-set$ of $G$ if $|S| =\gamma(G)$.
		
		 Domination is  one of the most widely studied topics in graph theory, e.g. \cite{haynes1998fundamentals,teresa1998domination} and refernces therein.
				This paper studies some issues in a particular variation of domination, namely the \emph{outer-connected domination}.
			  The concept of outer-connected domination number is introduced by Cyman in \cite{cyman2007outer1connected} and is further studied by others in \cite{akhbari2013outer,keil2013computing}. The outer-connected domination problem is shown to be an  $\mathbf{NP}$-complete problem for
	arbitrary graphs in \cite{cyman2007outer1connected}.
	A set $\tilde{D} \subseteq V$ of a graph $G=(V,E) $ is called an \emph{outer-connected dominating set} for $G$ if (1) $\tilde{D}$ is a dominating set for $G$, and (2)  $G [V \setminus \tilde{D}]$, the induced subgraph of $G$ by $V \setminus \tilde{D}$, is connected. The minimum size among all outer-connected dominating sets of $G$ is called the \emph{outer-connected domination number} of $G$ and is denoted by $\tilde{\gamma}_c(G)$ \cite{cyman2007outer1connected}.
	An outer-connected	dominating set $\tilde{D}$ is called a $\tilde{\gamma}-set$ of $G$ if $|\tilde{D}| =\tilde{\gamma}_c(G)$. 
	
	
In this paper, we focus on two graph alterations and their effects on the outer-connected domination number, (1) the removal of edges  from a graph and (2) the addition of edges to a graph.
	The bondage  and the reinforcement numbers are two important parameters for measuring the
	vulnerability and the stability of the network domination under link failure and link addition.
   The \emph{bondage number} of $G$, denoted by $b(G)$, is the minimum
number of edges whose removal from $G$ results in a graph with a domination number larger than the one for $G$. The
\emph{reinforcement number} of $G$, denoted by $r(G)$, is the smallest number of edges whose addition to G results in a graph with a
domination number smaller than the one for $G$. The bondage  and the reinforcement numbers  in graphs are very interesting research problems and  were introduced by Fink et al. in \cite{fink1990bondage} and Kok, Mynhardt in \cite{kok1990reinforcement}, respectively. Hattingh et al. in \cite{hattingh2008restrained} showed that the problem of  the restrained bondage  is $\mathbf{NP}$-complete, even
for bipartite graphs. Also, he has determined the exact values of the bondage number for several classes of graphs. Moreover, the reinforcement number for digraphs has been studied by Huang, Wang and Xu in \cite{huang2009reinforcement}.  Hu and Xu in \cite{hu2012complexity}  showed that the problems of the bondage, the total bondage, the reinforcement and the total
reinforcement numbers for an arbitrary graph are all $\mathbf{NP}$-hard, in  general. Recently, Xu in \cite{xu2013bondage} gave a review article on the bondage numbers.
 Moreover, Hu  and Sohn  in \cite{hu2014algorithmic}  showed that these problems remain  $\mathbf{NP}$-hard, even for bipartite graphs.
 Xu, Hu and Lu in \cite{lu2015p} studied the complexity of $p$-reinforcement and paired bondage problems in general graphs.
Jafari Rad in \cite{rad2016complexity} showed that  the problems of the $p$-reinforcement, the $p$-total reinforcement, the total restrained reinforcement and the $k$-rainbow reinforcement are all $\mathbf{NP}$-hard for bipartite graphs.  In addition, he also in \cite{rad2017complexity} showed  that the problems of the  paired bondage, the total restrained bondage, the independent bondage and the  $k$-rainbow bondage numbers  are all $\mathbf{NP}$-hard, even if they are restricted to bipartite graphs.  
From the algorithmic point of view, Hartnell et al. in \cite{hartnell1998edge} designed a linear time algorithm to compute the bondage
number of a tree.

The \emph{outer-connected  bondage number} of a graph $G$, where $G$ does not have any  isolated vertex, is denoted by $b_{OCD} (G)$ and is equal to the minimum number of edges
whose removal from $G$ results in a graph with an outer-connected domination number larger than the one for $G$. The \emph{outer-connected reinforcement
	number} of a graph $G$ which does not have any isolated vertices is denoted by $r_{OCD} (G)$ and  is equal to  the smallest number of edges whose addition from $G$ results
in a graph with an outer-connected domination number smaller than the one for $G$.

The rest of the paper is organized as follows: In Section 2, we describe some necessary preliminaries. In Sections 3 , we show  that the decision problem for the  outer-connected reinforcement number in general  graphs is $\mathbf{NP}$-hard. In sections 4, we show that the  outer-connected bondage number is also  $\mathbf{NP}$-hard in general  graphs. In the other words, we show that there are no polynomial time algorithms to compute these values for graphs, unless $P=NP$.  Finally, in Section 5, we determined the exact value of  bondage number for several classes of graphs.

\section{Preliminaries}

In order to show the $\mathbf{NP}$-hardness of the aforementioned problems, we do a polynomial time reduction from 3-satisfiability problem, 3-SAT, which is known to be an $\mathbf{NP}$-complete problem \cite{garey2002computers}. For concreteness,  Let $\mathcal{U}$ be a set of Boolean variables. A \emph{truth assignment} for $\mathcal{U}$ is a mapping $f : \mathcal{U} \rightarrow \{T,F\} $. If $f(u) = T$, then $u$ is said to be $ ``true"$ with respect to  $f$. In the case that  $f(u) = F$, then $u$ is said to be $``false"$ with respect to $f$. If $u$ is a variable in $\mathcal{U}$, then $u$ and $\bar{u}$ are literals over $\mathcal{U}$. The
 literal $u$ is true  if and only if the variable $u$ is true with respect to $f$ and the literal $\bar{u}$ is true if and only if the variable $u$ is false with respect to  $f$ .
 
 A \emph{clause} over $\mathcal{U}$ is a set of literals over $\mathcal{U}$ which represents the disjunction of these literals. It is said to be satisfied by a truth assignment if and only if at least one of its members is true with respect to that assignment. Similary, a  collection $\mathcal{C} = \set{C_1,C_2,\dots,C_m}$ of clauses over $\mathcal{U}$ is
 satisfiable if and only if there exists some truth assignment for $\mathcal{U}$, which  simultaneously satisfies all the clauses $C_i$ in $\mathcal{C}$ for $i = 1, 2, \dots,m$. Such a truth assignment is called a satisfying truth assignment for $\mathcal{C}$. Given these notations, the 3-SAT problem is specified as follows:
 
 \emph{3-SAT problem:}
 
\emph{ Instance: A collection $\mathcal{C} = \set{C_1,C_2,\dots,C_m}$ of clauses over a finite set  of variables $\mathcal{U}$ such that $|C_j| = 3$
 for $j = 1, 2, \dots,m$.}
 
\emph{ Question: Is there a truth assignment for $\mathcal{U}$ which satisfies all the clauses in $\mathcal{C}$?}
 
 \begin{theorem}
 	 (See Theorem 3.1 in \cite{cook2000p}.) The 3-SAT problem is $\mathbf{NP}$-complete.
\end{theorem}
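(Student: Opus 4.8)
The plan is to establish both directions of NP-completeness for 3-SAT. First I would show that 3-SAT $\in \mathbf{NP}$: given a candidate truth assignment $f : \mathcal{U} \to \{T, F\}$, whose description has size linear in $|\mathcal{U}|$, one can verify in time $O(m)$ that every clause $C_j$ contains at least one literal that is true with respect to $f$. Hence a satisfying assignment serves as a polynomial-size certificate that is checkable in polynomial time, which places the problem in $\mathbf{NP}$.

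The harder direction is NP-hardness, and here I would invoke the Cook--Levin theorem as the foundation by showing that every language $L \in \mathbf{NP}$ reduces to the general satisfiability problem in polynomial time. Since $L \in \mathbf{NP}$, there is a nondeterministic Turing machine $M$ deciding $L$ within time $p(n)$ for some polynomial $p$. For an input $x$ of length $n$, I would construct a Boolean formula $\Phi_x$ whose variables encode, for each of the $p(n)$ time steps, the current state, the head position, and the tape contents of $M$. The clauses would assert: (i) a valid initial configuration corresponding to $x$; (ii) that each step respects the transition relation of $M$; (iii) that exactly one state, one head position, and one symbol per cell hold at each time step; and (iv) that an accepting configuration is eventually reached. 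One then argues that $\Phi_x$ is satisfiable if and only if $M$ accepts $x$, that $\Phi_x$ has size polynomial in $n$, and that it is computable in polynomial time.

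Finally I would reduce general satisfiability to 3-SAT so that hardness transfers to clauses of width exactly three. For a clause $(\ell_1 \vee \cdots \vee \ell_k)$ with $k > 3$, I would introduce fresh variables $z_1, \dots, z_{k-3}$ and replace it by the chain $(\ell_1 \vee \ell_2 \vee z_1) \wedge (\bar{z}_1 \vee \ell_3 \vee z_2) \wedge \cdots \wedge (\bar{z}_{k-3} \vee \ell_{k-1} \vee \ell_k)$, while clauses of width one or two are padded by duplicating a literal. A short verification in both directions shows the original clause is satisfiable exactly when the resulting conjunction is, so the transformation is equisatisfiability-preserving and polynomial. Composing this conversion with the Cook--Levin reduction yields a polynomial reduction from any $L \in \mathbf{NP}$ to 3-SAT, completing the argument.

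The main obstacle is the Cook--Levin step: carefully designing the clauses that enforce consistency of the tableau of successive configurations and faithfully simulate the nondeterministic transition function, all while keeping the formula polynomial in size. By contrast, the clause-splitting reduction from general satisfiability to 3-SAT is comparatively routine, though one must still check that satisfiability is preserved in both directions, in particular that the auxiliary variables $z_i$ can always be set consistently when the original clause is satisfied.
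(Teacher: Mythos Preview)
Your outline is correct and follows the standard textbook route: membership in $\mathbf{NP}$ via certificate verification, hardness of CNF-SAT via the Cook--Levin tableau encoding, and then the clause-splitting reduction from CNF-SAT to 3-SAT. The paper, however, does not prove this theorem at all; it simply records the statement and defers to the cited reference, since the result is classical and not a contribution of the paper. So there is nothing to compare against beyond noting that you have supplied what the authors deliberately omitted.

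One small technical point worth tightening: you handle short clauses by ``padding by duplicating a literal,'' but the paper's instance specification treats clauses as sets with $|C_j| = 3$, so repeated literals do not increase the cardinality. The usual fix is to introduce fresh auxiliary variables instead: replace a unit clause $(\ell)$ by the four clauses $(\ell \vee y_1 \vee y_2) \wedge (\ell \vee \bar{y}_1 \vee y_2) \wedge (\ell \vee y_1 \vee \bar{y}_2) \wedge (\ell \vee \bar{y}_1 \vee \bar{y}_2)$, and a binary clause $(\ell_1 \vee \ell_2)$ by $(\ell_1 \vee \ell_2 \vee y) \wedge (\ell_1 \vee \ell_2 \vee \bar{y})$. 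This keeps each clause a genuine three-element set and preserves equisatisfiability; with that adjustment your argument is complete.
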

	
\section{$\mathbf{NP}$-hardness for the outer-connected reinforcement problem}	
In this section, we show that the outer-connected reinforcement problem for general graphs, is an NP-hard problem. The outer-connected reinforcement problem is defined as follows:

Outer-connected reinforcement problem:

Instance: A graph $G$ with no isolated vertices and a positive integer k.

Question: Is $ r_{OCD}(G) \le k $?

\begin {theorem}
The outer-connected reinforcement problem is NP-hard.
\end{theorem}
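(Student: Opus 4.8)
The plan is to reduce from 3-SAT to the outer-connected reinforcement problem. Given an instance with variable set $\mathcal{U} = \{u_1, \dots, u_n\}$ and clause collection $\mathcal{C} = \{C_1, \dots, C_m\}$, I would construct in polynomial time a graph $G$ together with an integer $k$ so that $r_{OCD}(G) \le k$ holds if and only if $\mathcal{C}$ is satisfiable. The basic gadget design I would use: for each variable $u_i$ introduce a small component (say a pair of adjacent vertices representing the literals $u_i$ and $\bar u_i$, or a short path/triangle), and for each clause $C_j$ introduce a clause vertex $c_j$ adjacent to the three literal vertices appearing in it. The heart of the reduction must tie the outer-connected domination number to the existence of a satisfying assignment, and the reinforcement operation (adding $k$ edges to shrink $\tilde\gamma_c$) must correspond to choosing a consistent truth assignment that satisfies every clause.

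Concretely, I would first compute $\tilde\gamma_c(G)$ for the constructed graph $G$ in its initial state and argue that this value is forced by the gadget structure: the outer-connectedness requirement (that $G[V \setminus \tilde D]$ be connected) is the crucial lever, because it penalizes dominating sets whose complement is disconnected. I would add a global connector apparatus — for instance one or more auxiliary vertices of high degree, or a path backbone linking the gadgets — so that the induced complement is connected exactly when the chosen dominating set respects a single coherent assignment across all variable gadgets. Then I would show the forward direction: if $\mathcal{C}$ is satisfiable by a truth assignment $f$, adding a specific set of at most $k$ edges (determined by $f$) creates shortcuts that let a smaller outer-connected dominating set exist, so $r_{OCD}(G) \le k$. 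For the reverse direction I would argue the contrapositive or a direct implication: any addition of at most $k$ edges that decreases $\tilde\gamma_c$ must encode, via which literal vertices get connected, a consistent assignment that makes each clause vertex dominated, hence a satisfying assignment for $\mathcal{C}$.

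The key steps, in order, are: (1) define the gadgets and the full graph $G$ and verify the construction is polynomial in $n+m$; (2) establish the exact value of $\tilde\gamma_c(G)$ before any edge addition, which requires a careful structural lemma about how outer-connectivity interacts with the variable and clause gadgets; (3) prove that satisfiability yields an edge set of size $\le k$ whose addition strictly decreases the outer-connected domination number; and (4) prove the converse, that any such size-$\le k$ edge addition forces a satisfying assignment. Steps (3) and (4) together give the if-and-only-if, and combined with the $\mathbf{NP}$-completeness of 3-SAT cited above, they yield $\mathbf{NP}$-hardness.

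The hard part will be step (2) together with the consistency enforcement in step (4). The subtlety is that the reinforcement parameter measures a \emph{decrease} in $\tilde\gamma_c$, so the gadget must be engineered so that the \emph{only} cheap way to force such a decrease is to add edges that simultaneously encode a single truth value per variable (no variable set both true and false) and cover every clause. Ensuring that illegitimate edge additions — ones that do not correspond to any valid assignment — cannot accidentally lower $\tilde\gamma_c$ below the threshold is where most of the technical care goes; I would enforce this by choosing gadget sizes and the connector structure so that the outer-connectivity constraint blocks all such spurious savings. Verifying that $\tilde\gamma_c$ is exactly the claimed value in both the original and the augmented graph, rather than merely bounding it, is the main obstacle, since outer-connected domination numbers are notoriously sensitive to the global connectivity of the complement.
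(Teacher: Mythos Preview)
Your high-level plan matches the paper's approach almost exactly: the paper also reduces from 3-SAT, uses a triangle $\{u_i,v_i,\bar u_i\}$ for each variable, a single vertex $c_j$ per clause wired to its three literal vertices, and a global connector apparatus (two vertices $x,y$ with $y$ adjacent to every literal and every $c_j$, $x$ adjacent to every $c_j$ and to $y$). So the architecture you sketch is the right one.

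The gap is that what you have written is a plan, not a proof: you never fix a concrete construction, never compute $\tilde\gamma_c(G)$, and never carry out the two directions of the equivalence. You yourself flag steps~(2) and~(4) as ``the hard part'' and ``the main obstacle,'' and indeed all the content of the proof lives there. In the paper's execution these are Claims~1--3: one shows $\tilde\gamma_c(G)=n+1$ (each triangle costs one vertex, and $N[x]$ costs one more); one shows that if a single added edge drops $\tilde\gamma_c$ to $n$ then the $\tilde\gamma$-set must pick exactly one vertex per triangle and avoid all $c_j$ and $y$; and one shows the if-and-only-if with satisfiability, where in the forward direction the single reinforcing edge is $\{x,u_1\}$ (or any true literal), allowing $x$ to be dominated without entering the set. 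A crucial design choice you leave unspecified is $k$: the paper takes $k=1$, which is what makes step~(4) tractable, since one need only analyze the effect of a \emph{single} new edge rather than an arbitrary set. Without committing to a specific gadget and a specific $k$, your outline cannot be completed, and the passages where you write ``I would argue'' and ``I would enforce this by choosing gadget sizes'' are precisely where the proof has to happen.
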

\begin{proof}
	We show the NP-hardness of the outer-connected reinforcement problem by a polynomial reduction from the 3-SAT.
	
	Let $I = (\mathcal{U} = \set{u_1,u_2,\dots,u_n}$ , $\mathcal{C} = \set{C_1,C_2,\dots,C_m})$ be an arbitrary instance of
	the 3-SAT problem. Without loss of generality, consider that $k=1$. We  construct a graph $G$  such that this instance of 3-SAT
	will be satisfiable if and only if $G$ has an outer-connected reinforcement of cardinality equal to  1, i.e. $ r_{OCD}(G) = 1 $. 
	Next, we describe the construction of $G$.
	
	To each $u_i \in \mathcal{U}$, we associate a triangle $\mathcal{S}_i= \set{u_i,v_i,\bar{u_i}}$. Note that $v_i \notin \mathcal{U}$. For each  clause $C_j  \in \mathcal{C}$, we associate a single vertex $c_j$ and add edges $\{c_j,u_i\}(\{c_j,\bar{u_i}\})$ if the literal $u_i(\bar{u_i}) $ appears in clause $C_j$, for $ j = 1, 2,\cdots,m$, respectively. Finally, we add vertices $x$ and $y$ and join them to every vertex $c_j$  for $ j = 1, 2,\cdots,m$  and add edges $\{x,y\}$, $\{y,u_i\}$ and $\{y,\bar{u_i}\}$ for $ i = 1, 2,\cdots,n$.

		\begin{figure}
\centering
\includegraphics[width=0.7\linewidth]{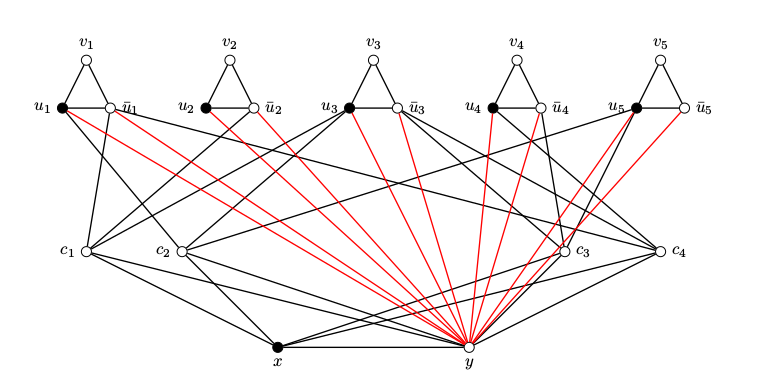}
\caption{{\footnotesize An instance of the outer-connected reinforcement problem. Bold points are the dominator vertices, $k=1$ and $\tilde{\gamma}_c=6.$ }}
\label{fig1}
\end{figure}

	For example, consider a 3-SAT instance $(\mathcal{U} = \set{u_1,u_2,u_3,u_4,u_5}$ , $\mathcal{C} = \set{C_1,C_2,C_3,C_4})$, where $C_1=\{\bar{u_1},\bar{u_2},u_3\},C_2=\{u_1,u_3,u_5\},C_3=\{\bar{u_3},\bar{u_4},u_5\}$ and $C_4=\{\bar{u_1},\bar{u_3},u_4\}$. Figure \ref{fig1} illustrates  the constructed graph corresponding to this instance.
	
	It can be  easily seen that the construction can be accomplished in polynomial time, since the graph $G$ contains $3n+m+2$ vertices and $5n+5m+1$ edges. All that remains to be shown is 
	that the $I = \mathcal{(U,C)}$ is satisfiable if and only if $r_{OCD}(G) = 1$. To this end,
	we will first show the following three claims.
	
\begin{quote}	
	\begin{claim} 
		\label{claim31}
	
		For any graph $G$ constructed as  is described above, we have $\tilde{\gamma}_c(G) = n+1$.
		\end{claim}
		
		\begin{proof}
			Let $\tilde{D}$ be a 	$\tilde{\gamma}$-set of $G$. Then 	$\tilde{\gamma}_c(G) = |\tilde{D}| \ge n+1$ 
			since it is necessary that $|\tilde{D} \cap V(\mathcal{S}_i)| \ge 1$ for  $i=1,2,\dots,n$ and also $|\tilde{D} \cap N[x]| \ge 1$. On the other hand, the set $\tilde{D}^\prime = \{x,u_1,u_2,\dots,u_n\}$ is an outer-connected dominating set for $G$, which implies that 
				$\tilde{\gamma}_c(G) \le |\tilde{D}^\prime| = n+1$.
				Thus, we obtain	$\tilde{\gamma}_c(G) =  n+1.$
		\end{proof}

\begin{claim}
	\label{claim32}
	Let $\tilde{D}_e$ denotes a $\tilde{\gamma}$-set of $G+e$ for an arbitrary edge  $e \in E(\bar{G})$. If there exists an edge $e \in E(\bar{G})$ such that $\tilde{\gamma}_c(G+e) = n$, then for  $i=1,2,\dots,n$, we have $|\tilde{D}_e \cap V(\mathcal{S}_i)| =1$  such that $c_j \notin \tilde{D}_e$ for $j=1,2,\dots,m$  and 
	$y \notin \tilde{D}_e$.
\end{claim}

\begin{proof}
	Since the connection between  the vertices in $V \setminus \tilde{D}_e$ is due to vertex $y,$ then 	$y \notin \tilde{D}_e$.
	On the contrary, suppose that  $|\tilde{D}_e \cap V(\mathcal{S}_\ell)| =0$ for some $\ell=1,2,\dots,n$.
	Since $v_\ell$ needs to be dominated by vertices in $\tilde{D}_e$ and $v_\ell,u_\ell,\bar{u}_\ell \notin \tilde{D}_e$,  
	then one  of the end-vertices of the edge $e$ should be  $v_\ell$, otherwise $\tilde{D}_e$ dominates it via
	the edge $e$ in $G+e$ and for every $i \neq \ell$, we have
	$|\tilde{D}_e \cap V(\mathcal{S}_{i})| \ge 1$, since $\tilde{D}_e$ dominates all the vertices $v_i$.
	
	It is clear that the  vertices $u_\ell$ and $\bar{u}_\ell$ do not simultaneously appear in the same clause in 
	$\mathcal{C}$, so, there is no $j$ such that the vertex $c_j$ is adjacent to both of them. 
	Since $u_\ell$ and $\bar{u}_\ell$ should be dominated by $\tilde{D}_e$, then there exists two distinct vertices 
	$c_j,c_\ell \in \tilde{D}_e$ such that $c_j$  and $c_\ell$ dominate   $u_\ell$ and $\bar{u}_\ell$, respectively.
	
	Hence, $|\tilde{D}_e| \ge n+1$, which is a contradiction. Therefore, we have 	$|\tilde{D}_e \cap V(\mathcal{S}_{i})| = 1$ 
	for all $i=1,2,\dots,n$ and 	$c_j \notin \tilde{D}_e$ for every $j$, since $|\tilde{D}_e|=n$.
	
	\end {proof}

\begin{claim}
	\label{clim33}
	The 3-SAT instance $\mathcal{(U,C)}$ is satisfiable if and only if $r_{OCD}(G) = 1$.
\end{claim}	
\begin{proof}
	Suppose that $r_{OCD} = 1$, which  means that there exists an edge $e$ in $\bar{G}$ such that
		$\tilde{\gamma}_c(G + e) = n$.
		 Let $ \tilde{D_e}$ be a  $\tilde{\gamma}$-set   of  $G+e$.		 
		 Then, by Claim \ref{claim32}, for all $i=1,2,\dots,n$, we have $|\tilde{D}_e \cap V(\mathcal{S}_i)| =1$. To be precise, we have either  $\tilde{D}_e \cap V(\mathcal{S}_i) = \{v_i\} $, $\tilde{D}_e \cap V(\mathcal{S}_i) =\{u_i\}$ or $\tilde{D}_e \cap V(\mathcal{S}_i) =\{\bar{u_i}\}$ for all $i=1,2,\dots,n$.
		
		Assume that the  mapping $f : \mathcal{U} \rightarrow \{T,F\} $ is defined as


		\begin{equation}
	  \label{e3}
		 f(u_i)=  \begin{cases} T, & \text{if} ~~ u_i \in \tilde{D}_e~~  \text{or} ~~ v \in \tilde{D}_e,\\  F, & \text{if} ~~ \bar{u_i} \in \tilde{D}_e. 
		   \end{cases}
		\end{equation}
	
	We want to show that the mapping $f$ is a satisfying truth assignment for $I = (\mathcal{U,C})$. So, it is sufficient to show that $f$ satisfies every clause in $\mathcal{C}$. We choose an arbitrary  clause  $C_j \in \mathcal{C}$. Since the corresponding vertex $c_j$ to clause $C_j$ is not adjacent to any vertices in correspondence with the set $\{v_i : 1 \le i \le n\}$, there exists an index $i$
	such that $c_j$ is dominated by $u_i \in \tilde{D}_e$ or $\bar{u_i} \in \tilde{D}_e$. Assume that $c_j$ is dominated by $u_i \in \tilde{D}_e$, then $u_i$ is adjacent to vertex $c_j$ in $G$, namely, the literal $u_i$ is in the clause $C_j$. Since $u_i \in \tilde{D}_e$, we have $f(u_i) = T $ by Equation \ref{e3}. So, $f$ satisfies the clause $C_j$.\\
	Now, suppose that the vertex $c_j$ is dominated by vertex $\bar{u_i} \in \tilde{D}_e$. So, $\bar{u_i}$ is adjacent to $c_j$ in $G$, namely, the literal $\bar{u_i}$ is in the clause $C_j$. Since $\bar{u_i} \in \tilde{D}_e$, we have $f(u_i) = F $ by Equation \ref{e3}, which implies that $\bar{u_i}$ is assigned the truth value $T$ by $f$. So, the clause $C_j$ is satisfied by $f$. Since clause $C_j$ is chosen arbitrarily, all the clauses in $\mathcal{C}$ are satisfied by $f$, which implies that $I = (\mathcal{U,C})$ is satisfiable.

	Conversly, suppose that $f : \mathcal{U} \rightarrow \{T,F\} $ is a satisfying truth assignment for $\mathcal{C}$
	and $\tilde{D}^\prime$  is a subset of $V(G)$ that is  constructed as follows.
	
	If $f(u_i)  = T$,  then we put the vertex $u_i$ in $\tilde{D}^\prime$ and if 
	$f(u_i) = F $, we put the vertex $\bar{u_i}$ in $\tilde{D}^\prime$.
	 Therefore, we have $|\tilde{D}^\prime| = n$. For  $ j=1,2,\dots,m$, at least one of the literals in clause $C_j$ 
	 is true under the assignment of $f$, given that $f$ is a  satisfying truth assignment for $I = (\mathcal{U,C})$. 
	 So, by the construction of $G$, the corresponding vertex $c_j$ in $G$ is adjacent to at least one vertex in 
	 $\tilde{D}^\prime$.
 Without loss of generality, 
 let $f(u_1)= T $. Then, $\tilde{D}^\prime$ is a dominating set for $G + \{x,u_1\}$. On 
 the other hand, the induced graph $G[V\setminus \tilde{D}^\prime]$ is connected.
 Hence, $\tilde{D}^\prime$ is an outer-connected dominating set for $G+\{x,u_1\}$ and $\tilde{\gamma}_c(G+\{x,u_1\}) \le |\tilde{D}^\prime| = n$.
	
By Claim \ref{claim31},  we have		$\tilde{\gamma}_c(G) = n+1$. Therefore, we obtain $\tilde{\gamma}_c(G+\{x,u_1\}) \le n < n+1 = \tilde{\gamma}_c(G) $, which means that $r_{OCD} = 1$.

\end{proof}
\end{quote}
 Claims \ref{claim31}, \ref{claim32} and \ref{clim33} conclude the proof.

\end {proof}

\section{The $\mathbf{NP}$-hardness of the outer-connected bondage}
In this section, we  show that the outer-connected bondage problem for general graphs is an $\mathbf{NP}$-hard problem. Consider the following decision problem.

Outer-connected bondage problem:

Instance: A graph $G$ with no isolated vertices and a positive integer k.

Question: Is $ b_{OCD}(G) \le k $?

\begin{theorem}
	The outer-connected bondage problem is $\mathbf{NP}$-hard.
	\end{theorem}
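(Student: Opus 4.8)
The plan is to prove $\mathbf{NP}$-hardness by a polynomial-time reduction from 3-SAT, reusing the gadget architecture of the previous section but arranging for edge \emph{deletion} to push the outer-connected domination number \emph{upward}. Given an instance $I=(\mathcal{U},\mathcal{C})$ with $|\mathcal{U}|=n$ and $|\mathcal{C}|=m$, I would keep the variable triangles $\mathcal{S}_i=\set{u_i,v_i,\bar{u_i}}$, keep the clause vertices $c_j$ joined to the literals occurring in $C_j$, and retain a hub vertex $y$ adjacent to every literal and every $c_j$. As in Claim \ref{claim32}, the hub $y$ is the unique vertex through which the undominated part $V\setminus\tilde{D}$ can remain connected, so every $\tilde{\gamma}$-set is forced to exclude $y$, and consequently every clause vertex $c_j$ must be dominated from inside the variable gadgets. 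To fix the value of $\tilde{\gamma}_c$ and to localize all fragility onto one designated edge $e$, I would graft a small constant-size \emph{enforcer} subgraph onto $y$, whose only purpose is to supply a background outer-connected dominating set of a prescribed size and to single out the critical edge. Setting $k=1$, the target is to prove that $I$ is satisfiable if and only if $b_{OCD}(G)\le 1$.

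I would then establish two claims in parallel with Claims \ref{claim31} and \ref{claim32}. The first is a closed form $\tilde{\gamma}_c(G)=s$ obtained by a covering argument: each triangle $\mathcal{S}_i$ must contribute at least one vertex to any dominating set because $v_i$ is dominated only inside $\mathcal{S}_i$, the enforcer contributes its fixed quota, and $y$ is barred by the connectivity requirement; a concrete set then attains the bound. The second is the structural statement that every $\tilde{\gamma}$-set of $G$ meets each $\mathcal{S}_i$ in exactly one vertex, avoids $y$ and all $c_j$, and hence its literal part is forced to dominate all clause vertices, i.e.\ to encode an assignment satisfying every clause, which is read off by a rule analogous to \eqref{e3}.

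For the equivalence I would argue both directions. For the forward direction, from a satisfying assignment I would exhibit a minimum outer-connected dominating set that dominates the clause vertices through literal vertices, and show that deleting $e$ simultaneously destroys every such set and the enforcer's background set, so that $\tilde{\gamma}_c(G-e)>s$ and therefore $b_{OCD}(G)=1$. For the converse, if some single deletion raises $\tilde{\gamma}_c$, then the pre-deletion $\tilde{\gamma}$-set, by the structural claim, uses exactly one literal per variable and dominates all clauses without recourse to $y$ or any $c_j$, which is precisely a satisfying assignment. Since $G$ has $O(n+m)$ vertices and $O(n+m)$ edges, the reduction is polynomial and will yield the claimed hardness.

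The main obstacle, which is absent from ordinary bondage, is that the outer-connectedness condition must be controlled under edge deletion: a deleted edge can disconnect $G[V\setminus\tilde{D}]$ \emph{for free}, spuriously raising $\tilde{\gamma}_c$ for reasons that have nothing to do with satisfiability. The hub $y$ and the enforcer must therefore be engineered so that the complement stays connected robustly --- every leftover component should attach to $y$ by at least two independent routes (assuming, without loss of generality, that each literal occurs in some clause) --- so that connectivity is never the accidental bottleneck, while domination of the clause vertices, the only place where the formula enters, is the single fragile ingredient. Making the enforcer pin $\tilde{\gamma}_c$ at the same value $s$ in both the satisfiable and the unsatisfiable case, yet forcing $b_{OCD}(G)\ge 2$ exactly when $I$ is unsatisfiable, is the delicate point and the crux of the whole argument.
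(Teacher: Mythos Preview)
Your plan contains an internal contradiction that the paper's proof avoids, and it omits the lemma that actually carries the converse direction.

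You want both (i) a closed form $\tilde{\gamma}_c(G)=s$ holding for \emph{every} instance (supplied by the enforcer's ``background'' outer-connected dominating set), and (ii) the structural claim that every $\tilde{\gamma}$-set of $G$ avoids $y$ and all $c_j$, hence dominates the clause vertices through literals and therefore encodes a satisfying assignment. These two cannot coexist: if (i) holds then a $\tilde{\gamma}$-set of size $s$ always exists, and by (ii) it yields a satisfying assignment, so every 3-SAT instance would be satisfiable. Your own last paragraph flags this (``the delicate point and the crux'') but does not resolve it; as stated, the converse ``$b_{OCD}(G)=1\Rightarrow$ satisfiable'' is vacuous, because (i)+(ii) already make ``satisfiable'' unconditionally true.

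The paper's reduction works differently. It does \emph{not} pin $\tilde{\gamma}_c(G)$: it proves only the lower bound $\tilde{\gamma}_c(G)\ge 3n+1$ and shows that equality holds \emph{iff} the instance is satisfiable (Claims~\ref{claim51}--\ref{claim53}). The converse direction is then driven by a lemma you do not have an analogue of: a uniform upper bound $\tilde{\gamma}_c(G-e)\le 3n+2$ valid for \emph{every} edge $e\in E(G)$ (Claim~\ref{claim54}), obtained by a case analysis over edge types. From this, $b_{OCD}(G)=1$ gives some $e$ with $3n+1\le\tilde{\gamma}_c(G)<\tilde{\gamma}_c(G-e)\le 3n+2$, forcing $\tilde{\gamma}_c(G)=3n+1$ and hence satisfiability. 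To make that case analysis go through, the paper also changes the gadgets relative to the reinforcement construction: the variable gadget $\mathcal{H}_i$ carries pendant leaves $x_i,y_i$ (forcing three vertices per gadget into any outer-connected dominating set), and the hub is not a single $y$ but a vertex $t$ together with a block $S=\{s_1,s_2,s_3,s_4\}$ providing three parallel $t$--$S$ routes, so that the complement stays connected and an explicit size-$(3n+2)$ set survives the deletion of any single edge. Your triangle-plus-single-hub picture does not give you this robustness, and without the analogue of Claim~\ref{claim54} the equivalence cannot be closed.
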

	\begin{proof}
		
		Let $I = (\mathcal{U} = \set{u_1,u_2,\dots,u_n}$ , $\mathcal{C} = \set{C_1,C_2,\dots,C_m})$ be an arbitrary instance of
		the 3-SAT problem. For an arbitrary positive integer k,  we will construct a graph $G$  such that this instance of 3-SAT
		will be satisfiable if and only if $G$ has an outer-connected bondage of cardinality of at most k, i.e. $ b_{OCD}(G) \le k $ . 
		The graph $G$ is constructed as follows.
		
		To each $u_i \in \mathcal{U}$, we associate a  vertex set $\mathcal{H}_i= \set{u_i,v_i,\bar{u_i},x_i,y_i}$ and add edges $\{x_i,u_i\}, \{y_i,\bar{u_i}\}, \{u_i,v_i\}$, and $\{\bar{u_i},v_i\} $ for $i = 1,2,\dots,n$.  For each  clause $C_j \in \mathcal{C}$, we associate a single vertex $c_j$ and add edge $\{c_j,u_i\}(\{c_j,\bar{u_i}\})$   if the literal $u_i(\bar{u_i} )$ is present in the clause $C_j$, where  $j = 1,2,\dots,m$. Then, we add a set of vertices $S = \{s_1,s_2,s_3,s_4\} $ and join $s_1,s_3$ and $s_4$ to  vertices $c_j$  and $s_2$. Finally, we add a vertex $t$ to the graph $G$, edges $\{t,s_1\}, \{t,s_3\}, \{t,s_4\}$, $\{t,u_i\}$ and $ \{t,\bar{u_i}\}$  for  $i = 1,2,\dots,n$ and set $k=1$.
		
		\begin{figure}
\centering
\includegraphics[width=0.7\linewidth]{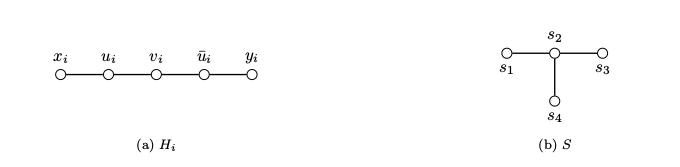}
\caption{{\footnotesize The graphs $H_i$ and $S$}}
\label{fig2}
\end{figure}

%
%

		It can be  easily seen that the construction can be accomplished in polynomial time, since the graph $G$ contains $5n+m+5$ vertices and $6m+6n+6$ edges.  All that remains to be shown is 
		that $I = \mathcal{(U,C)}$ is satisfiable if and only if $b_{OCD}(G) \le k$. Without loss of generality, let $k=1$. To this end,
		we will first prove the following five claims.
		
		\begin{quote}	
			\begin{claim} 
				\label{claim51}
				
				For any graph $G$ constructed as above, we have $\tilde{\gamma}_c(G) \ge 3n+1$.
				\end{claim}
				
				\begin{proof}
					Let $\tilde{D}$ be a 	$\tilde{\gamma}$-set of $G$. Then, 	$\tilde{\gamma}_c(G) = |\tilde{D}| \ge 3n+1$, 
					since $|\tilde{D} \cap V(\mathcal{H}_i)| \ge 3$ for  $i=1,2,\dots,n$. Note that to dominate a vertex $v_i $,  we need  at least one vertex and the leaf vertices $x_i$ and $y_i$ to be in $\tilde{D}$. Moreover, $|\tilde{D} \cap N[s_2]| \ge 1$. 
					\end{proof}	
					
					\begin{claim} 
						\label{claim52}
						If  $\tilde{\gamma}_c(G) = 3n+1$, then $c_j , t \notin \tilde{D}$ for  $j=1,2,\dots,m$, $\tilde{D} \cap V(S) = \{s_2\}$ and 	$|\tilde{D} \cap V(\mathcal{H}_i)| = 3$ for  $i=1,2,\dots,n$.
						\end{claim}		 	
						
						\begin{proof}
							Since the connection between  $\mathcal{H}_i$  and $S$ is due to the vertex $t,$ then 	$t \notin \tilde{D}$. Suppose that  $\tilde{\gamma}_c(G) = 3n+1$. Then,  
							$|\tilde{D} \cap V(\mathcal{H}_i)| = 3$ for  $i=1,2,\dots,n$, while $|\tilde{D} \cap V(S)| = 1$. Consequently, $c_j  \notin \tilde{D}$ for  $j=1,2,\dots,m$. Simultaneously, if $\tilde{D} \cap V(S) = \{s_1\}$, then, $s_3$ and $s_4$ are not dominated. Hence, $s_1 \notin \tilde{D}$ and similary, $s_3, s_4 \notin \tilde{D}$. So, $\tilde{D} \cap V(S) = \{s_2\}$.
							\end{proof}

							\begin{claim}
								\label{claim53}
								The 3-SAT instance $I = \mathcal{(U,C)}$ is satisfiable if and only if $\tilde{\gamma}_c(G) = 3n+1$.
								\end{claim}	
								\begin{proof}
									Suppose that $\tilde{\gamma}_c(G) = 3n+1$ and $c_j$ is   an arbitrary vertex. By Claim \ref{claim52}, this vertex is  adjacent to  either $u_i \in \tilde{D}$ or $\bar{u_i} \in \tilde{D}$,	since $ s_1 , s_3, s_4 \notin \tilde{D}$. As 	$|\tilde{D} \cap V(\mathcal{H}_i)| = 3$ for  $i=1,2,\dots,n$, it follows that either	$\tilde{D} \cap V(\mathcal{H}_i) = \{x_i,y_i,u_i\}$, $\tilde{D} \cap V(\mathcal{H}_i) = \{x_i,y_i,\bar{u_i}\}$ or $\tilde{D} \cap V(\mathcal{H}_i) = \{x_i,y_i,v_i\}$.

									Let the mapping $f : \mathcal{U} \rightarrow \{T,F\} $ be defined as
									\begin{equation}
									\label{e5}
									f(u_i)=  \begin{cases} T, & \text{if} ~~ u_i \in \tilde{D} ~~ \text{or} ~~ v_i \in \tilde{D}, \\  F, & \text{if} ~~ \bar{u_i} \in \tilde{D}. 
									\end{cases}
									\end{equation}

										To prove that the values assigned by the  mapping $f$ is a satisfying truth assignment for $I = (\mathcal{U,C})$, it is sufficient to show that $f$ satisfies every clause in $\mathcal{C}$. Let   $C_j \in \mathcal{C}$ be an arbitrarily  clause. Since the corresponding vertex to the clause $C_j$ is not adjacent to any vertex in correspondence with the set $\{v_i,x_i,y_i : 1 \le i \le n\}$, there exists an $i$
									such that $c_j$ is dominated by either $u_i \in \tilde{D}$ or $\bar{u_i} \in \tilde{D}$. Without loss of generality, assume that $c_j$ is dominated by $u_i \in \tilde{D}$. So, $u_i$ is adjacent to $c_j$ in $G$, namely the literal $u_i$ is in the clause $C_j$. Since $u_i \in \tilde{D}$, we have $f(u_i) = T $ by Equation \ref{e5}. So, the values assigned by the mapping $f$ satisfies the clause $C_j$.\\
									Now, suppose that the vertex $c_j$ is dominated by vertex $\bar{u_i} \in \tilde{D}$. So, the vertex $\bar{u_i}$ is adjacent to the vertex $c_j$ in $G$, namely the literal $\bar{u_i}$ is in the clause $C_j$. Since $\bar{u_i} \in \tilde{D}$, we have $f(u_i) = F $ by Equation \ref{e5}, which implies that $\bar{u_i}$ is assigned the truth value $T$ by $f$ and the clause $C_j$ is satisfied by $f$. Since $C_j$ was chosen arbitrarily, all the clauses in $\mathcal{C}$ are satisfied by $f$, which implies that $I = (\mathcal{U,C})$ is satisfiable.	
									
									Conversly, suppose that $f : \mathcal{U} \rightarrow \{T,F\} $ is a satisfying truth assignment for $\mathcal{C}$
									and $\tilde{D}^\prime$  is a subset of $V(G)$ which is  constructed as follows.	
									If $f(u_i)  = T$, then we put the vertex $u_i$ in $\tilde{D}^\prime$ and if 
									$f(u_i) = F $, then we put the vertex $\bar{u_i}$ in $\tilde{D}^\prime$.
									Therefore, we have $|\tilde{D}^\prime| = n$. For  $ j=1,2,\dots,m$, at least one of the literals in $C_j$ 
									is true under the assignment $f$, because the mapping $f$ is a  satisfying truth assignment for $I = (\mathcal{U,C})$. 
									So, by the construction of $G$, the vertex is in  correspondence to $C_j$ in $G$ is adjacent to at least one vertex in 
									$\tilde{D}^\prime$.
									Then, $D = \tilde{D}^\prime \cup (\bigcup_{i=1}^{n}{\{x_i,y_i\}} ) \cup \{s_2\}$ is a dominating set for $G$. On 
									the other hand, the induced graph $G[V\setminus \ D]$ is connected.
									Hence, $D$ is an outer-connected dominating set for $G$ and $\tilde{\gamma}_c(G) \le |D| = 3n + 1$. 	
									By Claim \ref{claim51},  we have		$\tilde{\gamma}_c(G) \ge 3n+1$. Therefore, we obtain $\tilde{\gamma}_c(G) = 3n+1$. 
									
									\end{proof}
									
									\begin{claim}
										\label{claim54}
										For every $e \in E(G)$, we have  $\tilde{\gamma}_c(G-e) \le 3n+2$.
										\end{claim}
										\begin{proof}
											Suppose that $E^\prime = \{\{s_2,s_3\},\{s_2,s_4\},\{s_1,c_j\},\{u_i,v_i\},\{y_i,\bar{u_i}\}, \{t,s_1\},\{v_i,\bar{u_i}\},\{t,\bar{u_i}\}\}$ and $E^{\prime\prime} =E \setminus E^\prime$. Let  $e \in E^{\prime\prime}$ be  an edge.  It is clear that the set $ D^\prime = (\bigcup_{i=1}^{n}{\{x_i,y_i,u_i\}}) \cup \{s_1,s_2\}$ is an outer-connected dominating set for $G-e$,
											since every vertex in $V \setminus D^\prime$ is adjacent to a vertex in $D^\prime$ due to an edge in $E^\prime$, and the induced graph $(G-e)[V\setminus D^\prime]$ is connected. This connection is estsblished by vertices $t$ and $s_i$ for $i \ne 1,2$. Given that $|D ^\prime|= 3n+2$,  then $\tilde{\gamma}_c(G-e) \le 3n+2$. We have four cases to consider:
											
									 \item[Case 1:] 	If either $e = \{s_2,s_3\} $, $e= \{s_1,c_j\}$ or $e = \{t,s_1\}$, then						 
									   $ D^\prime = (\bigcup_{i=1}^{n}{\{x_i,y_i,u_i\}}) \cup \{s_3,s_2\}$ is an outer-connected dominating set for $G-e$ and 
									 $\tilde{\gamma}_c(G-e) \le |D^\prime|= 3n+2$.

									\item[Case 2:]		If $e = \{s_2,s_4\} $, then $ D^\prime = (\bigcup_{i=1}^{n}{\{x_i,y_i,u_i\}}) \cup \{s_4,s_2\}$ is an outer-connected dominating set for $G-e$ and 
											$\tilde{\gamma}_c(G-e) \le |D^\prime|= 3n+2$. 
											
									\item[Case 3:]		If either $e = \{y_i,\bar{u_i}\} $, $e= \{\bar{u_i},v_i\}$ or $e= \{u_i,v_i\}$, then $ D^\prime = (\bigcup_{i=1}^{n}{\{x_i,y_i,v_i\}}) \cup \{s_1,s_2\}$ is an outer-connected dominating set for $G-e$ and 
											$\tilde{\gamma}_c(G-e) \le |D^\prime|= 3n+2$.
											
									\item[Case 4:]		If $e = \{t,\bar{u_i}\} $, then $ D^\prime = (\bigcup_{i=1}^{n}{\{x_i,y_i,\bar{u_i}\}}) \cup \{s_1,s_2\}$ is an outer-connected dominating set for $G-e$ and 
									$\tilde{\gamma}_c(G-e) \le |D^\prime|= 3n+2$.

											\end{proof}		
											\begin{claim}
												\label{claim55}
												$\tilde{\gamma}_c(G) = 3n+1 $ if and only if $ b_{OCD}(G) =1 $.
												\end{claim}	
												
												\begin{proof}
													First, suppose that $\tilde{\gamma}_c(G) = 3n+1 $. Let $e=\{s_1,s_2\}$ and  $\tilde{\gamma}_c(G) =\tilde{\gamma}_c(G-e)$. If $\tilde{D}$ is a $\tilde{\gamma}$-set of $G-e$, then  $\tilde{D}$ is a $\tilde{\gamma}$-set for $G$ of cardinality $3n+1$. By Claim \ref{claim52}, we have $c_j,  t \notin \tilde{D}$ for  $j=1,2,\dots,m$ and $\tilde{D} \cap V(S) = \{s_2\}$. So, the vertex $s_1$ is not dominated by $\tilde{D}$, which is a contradiction. Hence, $\tilde{\gamma}_c(G) < \tilde{\gamma}_c(G-e)$. So, $ b_{OCD}(G) =1 $.
													
													Next, assume that $ b_{OCD}(G) =1 $. By Claim \ref{claim51}, it follows that $\tilde{\gamma}_c(G) \ge 3n+1 $. Suppose that $e$ is an edge such that $\tilde{\gamma}_c(G) < \tilde{\gamma}_c(G-e)$. By Claim \ref{claim54}, we have $3n+1 \le \tilde{\gamma}_c(G) < \tilde{\gamma}_c(G-e) \le 3n+2$ which implies that $\tilde{\gamma}_c(G)= 3n+1$.

													\end{proof}
													
													Therefore, by Claims \ref{claim53} and  \ref{claim55}, we have $ b_{OCD}(G) =1 $ if and only if $I = \mathcal{(U,C)}$ is satisfiable. 
													\end{quote}
													\end{proof}

\section{Exact values for $b_{OCD}(G)$}
In this section, we establish several theorems on the exact values of $ b_{OCD}(G)$.
\begin{lemma}
	\label{lem1}
	Let $G=K_n$ be  a complete graph with $n \ge 3$ vertices. If  $\lceil \frac{n+1}{2} \rceil -1$ edges are removed from $G$, then $G$ is still
	 connected. 
\end{lemma}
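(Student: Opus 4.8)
The plan is to argue by contradiction, exploiting the fact that disconnecting a complete graph requires deleting an entire edge cut, and that the smallest such cut in $K_n$ consists of $n-1$ edges. First I would set $r = \lceil \frac{n+1}{2}\rceil - 1$ and suppose, for contradiction, that removing some set $F$ of exactly $r$ edges from $K_n$ yields a disconnected graph $G' = K_n - F$. Then the vertex set $V$ splits into two nonempty parts $A$ and $B = V \setminus A$ with no edge of $G'$ joining them.

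Next I would count the cross edges between $A$ and $B$. In $K_n$ every pair of vertices is adjacent, so there are exactly $|A|\cdot|B|$ edges with one endpoint in $A$ and one in $B$; since none of them survive in $G'$, all of them must lie in $F$. Hence $r = |F| \ge |A|\cdot|B|$. Writing $|A| = a$ with $1 \le a \le n-1$ and $|B| = n-a$, the product $a(n-a)$ is minimized over this range at the endpoints $a = 1$ or $a = n-1$, giving $|A|\cdot|B| \ge n-1$. Therefore $r \ge n-1$.

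Finally I would derive the contradiction by verifying that $r < n-1$, i.e. that $\lceil \frac{n+1}{2}\rceil < n$ for $n \ge 3$. I would split into the odd and even cases: for odd $n$ the left side equals $\frac{n+1}{2}$, and the inequality reduces to $n > 1$; for even $n$ it equals $\frac{n+2}{2}$, and the inequality reduces to $n > 2$. Both hold for $n \ge 3$, so $r = \lceil \frac{n+1}{2}\rceil - 1 < n-1$, contradicting $r \ge n-1$. Thus no deletion of $r$ edges can disconnect $K_n$, and the graph remains connected.

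I do not expect any genuine obstacle here; the only points requiring a little care are the elementary minimization of $a(n-a)$ over $1 \le a \le n-1$ and the case check on the ceiling, both of which are routine. An even shorter route is to invoke the standard fact that the edge connectivity of $K_n$ equals $n-1$, so that deleting fewer than $n-1$ edges cannot disconnect it; the argument above simply reproves this fact in the special case needed for the lemma.
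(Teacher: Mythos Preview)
Your proof is correct but takes a different route from the paper's. The paper argues by counting the total number of edges remaining after the deletion and comparing it to the threshold $\frac{(n-1)(n-2)}{2}$: a simple graph on $n$ vertices with more than $\frac{(n-1)(n-2)}{2}$ edges must be connected, since $K_{n-1}$ together with an isolated vertex is the densest disconnected graph on $n$ vertices. The paper then checks, separately for $n$ odd and $n$ even, that $\binom{n}{2} - \bigl(\lceil \tfrac{n+1}{2}\rceil - 1\bigr)$ exceeds this threshold. Your argument instead goes through the edge connectivity of $K_n$: any edge cut separating a nonempty part $A$ from its complement has size $|A|\,(n-|A|) \ge n-1$, and you verify that $\lceil \tfrac{n+1}{2}\rceil - 1 < n-1$ for $n \ge 3$. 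Your approach is a bit more direct and yields the sharper (and in fact optimal) bound that fewer than $n-1$ deletions can never disconnect $K_n$; the paper's edge-count argument is a coarser extremal fact but avoids any discussion of cuts.
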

\begin{proof}
	Two cases are considered:
	\begin{enumerate}
		\item If  $n$ is odd, then $\lceil \frac{n+1}{2} \rceil = \frac{n+1}{2}$ and the number of remaining edges by removing $\lceil \frac{n+1}{2} \rceil -1$ edges  from $G$ is equal to 
		\begin{equation}
		   \frac{n(n-1)}{2} - (\frac{n+1}{2} -1) = \frac {(n-1)^2}{2}. 
		\end{equation}
	On the other hand, a graph of order $n$ cannot be disconnected if it is simple and the number of its edges is greater than $\frac{(n-1)(n-2)}{2}$. This is because the complete graph on $n-1$ vertices has $\frac{(n-1)(n-2)}{2}$ edges. Consequently, since $\frac {(n-1)^2}{2}>\frac{(n-1)(n-2)}{2}$ and $n \ge 3$, the graph $G$ remains connected even after the removal of  $\lceil \frac{n+1}{2} \rceil -1$ edges.
	
		\item If  $n$ is even, then $\lceil \frac{n+1}{2} \rceil = \frac{n+2}{2}$ and the number of remaining edges by removing $\lceil \frac{n+1}{2} \rceil -1$ edges  from $G$ is equal 
		\begin{equation}
		\frac{n(n-1)}{2} - (\frac{n+2}{2} -1) = \frac {n(n-2)}{2}. 
		\end{equation}
		On the other hand, a graph of order $n$ cannot be disconnected if it is simple and the number of its 
		edges is greater than $\frac{(n-1)(n-2)}{2}$. This is because the complete graph with $n-1$ vertices has $\frac{(n-1)(n-2)}{2}$ edges. Consequently, since $\frac {n(n-2)}{2}>\frac{(n-1)(n-2)}{2}$ and $n \ge 3$, the graph $G$ remains connected even after the removal of  $\lceil \frac{n+1}{2} \rceil -1$ edges.	
			
	\end{enumerate}
\end{proof}
\begin{theorem}
	Let $K_n$ be  a complete graph with $n \ge 3$ vertices. Then, we have
	  	\begin{equation}
	  	b_{OCD}(K_n)= \begin{cases}	1, & \text{if} ~~ n=3, \\ \lceil \frac{n}{2} \rceil, & \text{otherwise}. 
	  	\end{cases}
	  	\end{equation}
	  	
\end{theorem}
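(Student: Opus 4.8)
The plan is to first pin down the base quantity and then bound the number of edge removals needed to raise it. Since any single vertex $v$ of $K_n$ dominates all others and $K_n-v=K_{n-1}$ is connected, the set $\{v\}$ is an outer-connected dominating set, so $\tilde{\gamma}_c(K_n)=1$. The whole problem therefore reduces to an observation I would isolate at the outset: for a spanning subgraph $G'=K_n-F$, one has $\tilde{\gamma}_c(G')=1$ if and only if there is a vertex $v$ with $\deg_{G'}(v)=n-1$ (so $\{v\}$ dominates) and $G'-v$ connected (so the outside is connected). Raising $\tilde{\gamma}_c$ above $1$ thus means destroying, for \emph{every} vertex $v$, at least one of these two properties, and $b_{OCD}(K_n)$ is the least number of deletions that accomplishes this.

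For the upper bound when $n\ge 4$, I would remove a minimum edge cover of $K_n$. By the Gallai identity (minimum edge cover equals $n$ minus maximum matching) and since a maximum matching of $K_n$ has $\lfloor n/2\rfloor$ edges, a minimum edge cover has $n-\lfloor n/2\rfloor=\lceil n/2\rceil$ edges. After removing such a cover, every vertex is incident to a deleted edge and hence has degree at most $n-2$, so no single vertex dominates; consequently $\tilde{\gamma}_c$ becomes at least $2$, giving $b_{OCD}(K_n)\le\lceil n/2\rceil$.

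The crux is the matching lower bound $b_{OCD}(K_n)\ge\lceil n/2\rceil$ for $n\ge 4$. Here I would take an arbitrary edge set $F$ with $|F|\le\lceil n/2\rceil-1$ and exhibit a single-vertex outer-connected dominating set in $G'=K_n-F$. A counting step comes first: the edges of $F$ are incident to at most $2|F|\le n-1<n$ vertices, so some vertex $v$ is incident to no edge of $F$ and still has full degree $n-1$. Because $v$ is untouched, every edge of $F$ lies inside $G'-v=K_{n-1}-F$; I then invoke Lemma \ref{lem1} with $n$ replaced by $n-1$ (legitimate since $n-1\ge 3$), whose threshold is exactly $\lceil n/2\rceil-1$, to conclude that $K_{n-1}-F$ is connected. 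Then $\{v\}$ is an outer-connected dominating set of $G'$ and $\tilde{\gamma}_c(G')=1$, so no removal of at most $\lceil n/2\rceil-1$ edges can raise the parameter. I expect the main obstacle to be precisely the tightness of this double appearance of $\lceil n/2\rceil$: the counting bound on touched vertices and the connectivity threshold of Lemma \ref{lem1} must coincide at the same value, and the argument genuinely requires $n-1\ge 3$.

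Finally, the case $n=3$ must be treated separately, and it is exactly where the argument above breaks: for $n=3$ the deleted-vertex subgraph is $K_2$, to which Lemma \ref{lem1} does not apply. Removing a single edge of $K_3$ yields a path whose only full-degree vertex is the center, and deleting that center disconnects the two end vertices; hence $\tilde{\gamma}_c$ jumps to $2$ after one removal, giving $b_{OCD}(K_3)=1$. I would present this short direct computation to complete the statement.
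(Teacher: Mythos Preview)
Your proof is correct and follows essentially the same approach as the paper: the lower bound via counting untouched vertices together with Lemma~\ref{lem1} applied to $K_{n-1}$ is identical, and your upper bound construction (a minimum edge cover) coincides with the paper's matching-based construction, only phrased more uniformly via the Gallai identity instead of splitting into even and odd cases. The separate treatment of $n=3$ and the reason it is needed are also the same.
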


\begin{proof}
	If $n=3$,  then we have  $\tilde{\gamma}_c(K_3) = 1$. By removing any edges from $G$, it turns into a $P_3$. So, we have $ b_{OCD}(K_3)=1 $ since $\tilde{\gamma}_c(P_3) = 2$. 
	
	Now, suppose that $n > 3$. Let the  graph $G^ \prime$ be obtained by removing fewer than $\lceil \frac{n}{2} \rceil$ edges from $G$. Then, $G^ \prime$ contains at least a vertex of degree $n-1$ . Let the vertex $v$ be of degree $n-1$. In the other hand,  according to the Lemma \ref{lem1} we have the induced graph $G^ \prime[V \setminus {v}]$ is connected. Then, $\tilde{\gamma}_c(G^ \prime) =1$.  So, we have 
	\begin{equation}
	\label{e11}
	b_{OCD}(G) \ge \lceil \frac{n}{2} \rceil.
	\end{equation} 
	Now, we need to consider the following two cases:
	\begin{enumerate}
		\item	 If  $n$ is even. Let $H$  be the graph obtained by removing $\lceil \frac{n}{2} \rceil$ independent edges from $G$. Then the degree of every  vertex $v \in V(H)$ is $n-2$. So, we have $\tilde{\gamma}_c(H) \ge 2$.
		\item    If $n$ is odd. Let $H$  be the graph obtained by removing $ \frac{n-1}{2}$ independent edges from $G$. Then, there is exactly one  vertex $v \in V(H)$ such that the degree of $v$ is  $n-1$. If we remove one edge incident with $v$, then we have  $\tilde{\gamma}_c(H) \ge 2$.
		  \end{enumerate}
	 
	  In either cases, by removing $\lceil \frac{n}{2} \rceil$ we have $\tilde{\gamma}_c(G) < \tilde{\gamma}_c(H)$. So, 
	  \begin{equation}
	  \label{e12}
	  b_{OCD}(G) \le \lceil \frac{n}{2} \rceil.
	  \end{equation}
	 
	 Therefore, by Equations \ref{e11} and \ref{e12}, we have $b_{OCD}(G) = \lceil \frac{n}{2} \rceil$.
	   
\end{proof}


\begin{theorem}
	\label{t71}
	Let  $C_n$ be  a cycle graph with $n \ge 3$ vertices. Then, we have
	\begin{equation}
	b_{OCD}(C_n)= 		  
	\begin{cases}
	1, &  \text{if} ~~~ n=3, \\ 
	\lceil \frac{n}{3} \rceil, &  \text{otherwise}. 
	\end{cases}
	\end{equation}
	
\end{theorem}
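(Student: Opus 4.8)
The plan is to first pin down the outer-connected domination number of the cycle itself, and then to understand exactly how edge deletions can raise it. I would begin by showing that $\tilde{\gamma}_c(C_n) = n-2$ for every $n \ge 3$. The key observation is that if $\tilde{D}$ is any outer-connected dominating set, then $V \setminus \tilde{D}$ must induce a connected subgraph of $C_n$, hence a contiguous arc of the cycle. Any vertex lying strictly in the interior of such an arc has both of its cycle-neighbours inside the arc, and therefore cannot be dominated by $\tilde{D}$; consequently the arc contains at most two vertices, which gives $|\tilde{D}| \ge n-2$. Exhibiting the set that leaves out two adjacent vertices, each dominated by its outside neighbour, shows the bound is tight.

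Next I would analyse the graph $H = C_n - F$ obtained by deleting a set $F$ of edges. Deleting $|F| = k$ edges from a cycle always yields exactly $k$ vertex-disjoint path components whose orders sum to $n$. Since $V \setminus \tilde{D}$ must be connected, it lies entirely inside one component, so every other component must be wholly contained in $\tilde{D}$. Restricting attention to the chosen component $P_m$, the same ``no interior vertex of the uncovered arc can be dominated'' argument shows that the complement of $\tilde{D}$ within it has at most $a(m)$ vertices, where $a(m) = 2$ if $m \ge 4$, $a(3)=a(2)=1$, and $a(1)=0$. Checking that this per-component bound is simultaneously attainable in the largest component yields $\tilde{\gamma}_c(H) = n - \max_j a(m_j)$ over the component orders $m_j$, and in particular $\tilde{\gamma}_c(H) > n-2$ if and only if every component has at most three vertices.

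With this reduction, computing $b_{OCD}(C_n)$ becomes a purely combinatorial covering question: what is the least number of edges whose removal cuts the cycle into arcs each containing at most three vertices? Removing $k$ edges produces $k$ arcs whose orders sum to $n$, so requiring every arc to have at most three vertices forces $k \ge \lceil n/3 \rceil$; conversely $\lceil n/3 \rceil$ well-placed deletions split the cycle into arcs of order at most $3$ and hence strictly increase the outer-connected domination number. For the matching lower bound I would verify that any $k < \lceil n/3 \rceil$ satisfies $3k < n$, forcing a component of order at least $4$ and thus $\tilde{\gamma}_c(H) = n-2$, so no such deletion is a bondage set. Finally the case $n=3$ is immediate: $\tilde{\gamma}_c(C_3)=1$ and deleting a single edge produces $P_3$ with $\tilde{\gamma}_c(P_3)=2$, giving $b_{OCD}(C_3)=1$.

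The main obstacle I anticipate is the disconnected-graph bookkeeping in the second step: one must argue cleanly that $V \setminus \tilde{D}$ is confined to a single path component, and that the per-component bound $a(m)$ is both necessary and attainable, so that $\tilde{\gamma}_c(H)$ is governed solely by the largest component. Handling the small components with $m \le 3$ and the empty-complement corner case correctly is where the argument is most delicate, although none of it is deep once the arc-interior domination obstruction has been isolated.
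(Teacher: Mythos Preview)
Your proposal is correct and follows essentially the same route as the paper: both reduce the problem to showing that $\tilde{\gamma}_c(C_n - F) > n-2$ if and only if every path component of $C_n - F$ has at most three vertices, and then count the minimum number of cuts on the cycle needed to achieve this. The only difference is that the paper imports $\tilde{\gamma}_c(C_n)=\tilde{\gamma}_c(P_n)=n-2$ from Cyman and the disconnected-graph formula from an external lemma, whereas you derive both directly via the ``interior vertex of the uncovered arc cannot be dominated'' obstruction, making your version more self-contained but not structurally different.
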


\begin{proof}
	If $n=3$, then we have $\tilde{\gamma}_c(C_3) = 1$. By removing any edges from $G$, it turns into a $P_3$. So, $ b_{OCD}(C_3)=1 $ since $\tilde{\gamma}_c(P_3) = 2$.
	
	Cyman in \cite{cyman2007outer1connected} has shown that if $n \ge 4$, then $\tilde{\gamma}_c(C_n) = \tilde{\gamma}_c(P_n) = n-2$. So, for $n \ge 4$, we have $ b_{OCD}(C_3) \ge 1 $. By removing some of the edges from $P_n$, a set of components $\{H_1,H_2,\cdots,H_m\}$ is obtained such that every component is a path. If
	$H = \bigcup_{i=1}^{m}{H_i} $ and $\tilde{D}(H)$ is an outer-connected dominating set for $H$, then $\tilde{D}(H) =  \bigcup_{i=1}^{m-1}{H_i} \cup \tilde{D}(H_m)$ (See Lemma 3.1 in \cite{hashemipour2017outer}). 
	Therefore, if there exists  at least one component such as $H_i$ with four vertices, then $\tilde{\gamma}_c(C_n) = \tilde{\gamma}_c(P_n) = \tilde{\gamma}_c(H) =n-2$. Otherwise, we have $\tilde{\gamma}_c(C_n) = \tilde{\gamma}_c(P_n) < \tilde{\gamma}_c(H) =n-1$. So, we need to break the path $P_n$ in such a way that there exist no more than three vertices in any component $H_i$. To this end, we have to remove $\lceil \frac{n}{3} \rceil - 1$ edges from $P_n$, which with the deleted edge from $C_n$ are counted to $\lceil \frac{n}{3} \rceil $. 
\end{proof}

\begin{theorem}
	Let  $P_n$ be a path with $n \ge 3$ vertices. Then, we have
	\begin{equation}
	b_{OCD}(P_n)= 		  
	\begin{cases}
	1, & \text{if} ~~ n=2, \\
	2, & \text{if} ~~ n=3, \\ 
	\lceil \frac{n}{3} \rceil - 1{\tiny }, & \text{otherwise}. 
	\end{cases}
	\end{equation}
	
\end{theorem}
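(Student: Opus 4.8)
The plan is to split the statement into its three regimes and to lean on the same structural description of outer-connected dominating sets of a linear forest that was already invoked in the proof of Theorem \ref{t71}. Recall from there that $\tilde{\gamma}_c(P_n) = n-2$ for $n \ge 4$, while direct inspection gives $\tilde{\gamma}_c(P_3) = 2$ and $\tilde{\gamma}_c(P_2) = 1$. The two exceptional values $n=2$ and $n=3$ I would dispose of by hand. For $P_2$, deleting its unique edge produces two isolated vertices, forcing $\tilde{\gamma}_c = 2 > 1$, so $b_{OCD}(P_2) = 1$. For $P_3$, deleting a single edge yields $P_1 \cup P_2$, whose outer-connected domination number is still $2$, whereas deleting both edges yields three isolated vertices with $\tilde{\gamma}_c = 3 > 2$; hence $b_{OCD}(P_3) = 2$. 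These match the stated values and also explain why the general expression $\lceil \frac{n}{3} \rceil - 1$, which returns $0$ for $n=3$, must be overridden in these small cases.

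For the main case $n \ge 4$, the key object is the effect of removing $r$ edges from $P_n$: the result is a linear forest $H = H_1 \cup \cdots \cup H_m$ with $m = r+1$ path components and $\sum_i |V(H_i)| = n$. Using Lemma 3.1 of \cite{hashemipour2017outer} exactly as in Theorem \ref{t71}, since $(G-\text{edges})[V \setminus \tilde{D}]$ must be connected, the set $V \setminus \tilde{D}$ lies inside a single component, so any outer-connected dominating set must contain every vertex of all but one component. This yields
\begin{equation}
\tilde{\gamma}_c(H) = n - \max_{1 \le i \le m}\bigl(|V(H_i)| - \tilde{\gamma}_c(H_i)\bigr).
\end{equation}
I would then record the per-component ``saving'' $|V(H_i)| - \tilde{\gamma}_c(H_i)$: it equals $2$ precisely when $H_i$ has at least four vertices, and is at most $1$ whenever $|V(H_i)| \le 3$. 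Consequently $\tilde{\gamma}_c(H) > n-2$ if and only if no component has four or more vertices, i.e.\ every $H_i$ satisfies $|V(H_i)| \le 3$.

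The remaining step is an elementary counting argument. Breaking $P_n$ into pieces each of order at most $3$ requires at least $\lceil \frac{n}{3} \rceil$ components (since $n \le 3m$ forces $m \ge \lceil \frac{n}{3} \rceil$), hence at least $\lceil \frac{n}{3} \rceil - 1$ edge deletions; this gives the lower bound $b_{OCD}(P_n) \ge \lceil \frac{n}{3} \rceil - 1$. Conversely, cutting $P_n$ into $\lceil \frac{n}{3} \rceil$ consecutive blocks of sizes in $\{2,3\}$ uses exactly $\lceil \frac{n}{3} \rceil - 1$ deletions and leaves every component of order at most $3$, so by the displayed identity $\tilde{\gamma}_c$ rises to $n-1 > n-2$; this gives the matching upper bound. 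Combining the two bounds yields $b_{OCD}(P_n) = \lceil \frac{n}{3} \rceil - 1$ for $n \ge 4$.

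I expect the main obstacle to be the clean justification of the saving identity, and in particular making rigorous the claim that an outer-connected dominating set of a disconnected linear forest must swallow all but one component, so that the minimum decomposes component-wise. This is precisely where the connectivity constraint on $V \setminus \tilde{D}$ does the work. Once it is in place, the dichotomy between saving $2$ (paths of order $\ge 4$) and saving $\le 1$ (paths of order $\le 3$) is a short case check, and the edge count follows immediately from the pigeonhole bound $n \le 3m$.
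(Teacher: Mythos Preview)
Your proposal is correct and follows essentially the same route as the paper: the paper disposes of $n=2,3$ by inspection and then refers back to Theorem~\ref{t71}, whose argument is exactly the linear-forest decomposition via Lemma~3.1 of \cite{hashemipour2017outer} together with the dichotomy between components of order $\ge 4$ (saving $2$) and of order $\le 3$ (saving $\le 1$). Your write-up is simply a more explicit version of that same argument, including the clean pigeonhole bound $n\le 3m$ for the lower bound and the block decomposition for the upper bound.
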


\begin{proof}
The cases for $n=2$ and $n=3$  are quite clear. The proof of the case $n \ge 4 $ is the same as in Theorem \ref{t71}.	
\end{proof}

\begin{theorem}
	The graph $G$ is a galexy of order $n \ge 4$ if and only if $b_{OCD}(G)=|E(G)|$.
\end{theorem}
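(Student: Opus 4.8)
The plan is to reduce the whole statement to a single structural fact about the extremal value of $\tilde{\gamma}_c$: for a graph $H$ of order $n$ without isolated vertices, $\tilde{\gamma}_c(H)=n-1$ precisely when $H$ is a galaxy. First I would record two bounds valid for \emph{every} graph $H$ with at least one edge. On the one hand, if $v$ is any non-isolated vertex, then $V(H)\setminus\{v\}$ is an outer-connected dominating set: its complement is the single vertex $v$ (hence connected) and $v$ is dominated by any of its neighbours, so $\tilde{\gamma}_c(H)\le n-1$. On the other hand, deleting all edges of $G$ yields the edgeless graph on $n$ vertices, every isolated vertex of which must lie in any dominating set, forcing $\tilde{\gamma}_c=n>n-1\ge\tilde{\gamma}_c(G)$; hence $b_{OCD}(G)\le |E(G)|$ holds for every graph without isolated vertices, and the theorem asserts exactly when this universal bound is attained.

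The crux is the characterization of the case $\tilde{\gamma}_c(H)=n-1$, which I would phrase through its negation: $\tilde{\gamma}_c(H)\le n-2$ if and only if $H$ has an edge $uw$ with $\deg(u)\ge 2$ and $\deg(w)\ge 2$. For the ``if'' part, $V(H)\setminus\{u,w\}$ is outer-connected dominating, since its complement $\{u,w\}$ is connected (the edge $uw$) and each of $u,w$ has a further neighbour that dominates it. For the ``only if'' part, take an outer-connected dominating set $\tilde{D}$ with $|\tilde{D}|\le n-2$; its complement $C=V(H)\setminus\tilde{D}$ has at least two vertices and is connected, so $C$ contains an edge $uw$, and as $u,w\notin\tilde{D}$ each must additionally have a neighbour in $\tilde{D}$, giving $\deg(u),\deg(w)\ge 2$. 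It then remains to observe that, among graphs with no isolated vertices, ``no edge has both endpoints of degree $\ge 2$'' is equivalent to ``every component is a star'': if the set of vertices of degree $\ge 2$ is independent, then every neighbour of such a vertex is a leaf, so each vertex of degree $\ge 2$ together with its neighbours already forms a whole star component, while any component containing no such vertex is a single edge. This is exactly the statement that $H$ is a galaxy.

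With this equivalence, both directions follow. If $G$ is a galaxy then $\tilde{\gamma}_c(G)=n-1$, and for any proper subset $F\subsetneq E(G)$ at least one edge survives, so the first bound gives $\tilde{\gamma}_c(G-F)\le n-1=\tilde{\gamma}_c(G)$ and no deletion of fewer than all edges can raise the parameter; since deleting all edges does raise it, $b_{OCD}(G)=|E(G)|$. Conversely, if $G$ is not a galaxy then $\tilde{\gamma}_c(G)\le n-2$, and I would delete all edges but one, i.e. take $F=E(G)\setminus\{e\}$ (a proper subset, as $n\ge 4$ forces $|E(G)|\ge 2$). The graph $G-F$ is a single edge together with $n-2$ isolated vertices, for which $\tilde{\gamma}_c=n-1$ (the isolated vertices and one endpoint of $e$), so $\tilde{\gamma}_c(G-F)=n-1>n-2\ge\tilde{\gamma}_c(G)$ and therefore $b_{OCD}(G)\le |E(G)|-1<|E(G)|$.

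I expect the main obstacle to be the structural equivalence of the middle paragraph rather than the bondage arithmetic: one must argue carefully that the degree condition is genuinely the same as being a galaxy, and must fix the convention for $\tilde{\gamma}_c$ on graphs that acquire isolated vertices under edge deletion (such vertices are forced into every dominating set, and a one-vertex or empty complement counts as connected). Once the equivalence ``$\tilde{\gamma}_c(H)=n-1$ if and only if $H$ is a galaxy'' is in place, the remaining inequalities are immediate consequences of the two universal bounds recorded at the outset.
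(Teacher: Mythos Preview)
Your proof is correct and follows essentially the same route as the paper's: both directions hinge on the fact that a graph with at least one edge has $\tilde{\gamma}_c\le n-1$, with equality precisely for galaxies, and the converse is dispatched by deleting all edges but one to force $\tilde{\gamma}_c=n-1>n-2\ge\tilde{\gamma}_c(G)$. Your version is more self-contained---you prove the galaxy characterization directly via the ``edge with both endpoints of degree $\ge 2$'' criterion, whereas the paper invokes the equivalent ``contains a cycle or a $P_4$'' condition together with external references.
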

\begin{proof}
	According to  the Observation 2 in \cite{cyman2007outer1connected} and Lemma 3.1 in \cite{hashemipour2017outer}, it is clear that if $G$ is a galexy, then  $\tilde{\gamma}_c(G) = n-1$. Moreover, the only  graph with   outer-connected domination number equal to $n$ is $\bar{K_n}$. So, $b_{OCD}(G)=|E(G)|$.
	Converesly, suppose that $b_{OCD}(G)=|E(G)|$. If a component $H$ of $G$ is not a star, then $H$ either contains a cycle or a $P_4$, which means that  $\tilde{\gamma}_c(G) \le n-2$. Let $e=\{v_1,v_2\}$ be an edge in the cycle or the $P_4$. If $H$ is a graph obtained  by removing all  edges from $G$ expect $e$, then we have $\tilde{\gamma}_c(H) = n-1 > \tilde{\gamma}_c(G) $. This implies $b_{OCD}(G) \le |E(G)|-1$, which is a contradiction. 
\end{proof}

%

\end{document}